\newcommand{\be}{\begin{equation}}
\newcommand{\ee}{\end{equation}}
\newcommand{\norm}[1]{ || #1 ||}
\newcommand{\mb}[1]{\mathbf{#1}}
\newcommand{\bs}[1]{\boldsymbol{#1}}
\newcommand{\virg}[1]{\textquotedblleft#1\textquotedblright}
\newcommand{\tonde}[1]{\left( #1 \right)  }
\newcommand{\quadre}[1]{\left[  #1 \right]  }
\newcommand{\graffe}[1]{\left\lbrace   #1 \right\rbrace   }
\newcommand{\realpart}[1]{\mathrm{Re}\graffe{#1}}
\newcommand{\impart}[1]{\mathrm{Im}\graffe{#1}}
\newcommand{\cvec}[1]{ \mathrm{vec}\left(  #1 \right) }
\newtheorem{theorem}{Theorem}
\newtheorem{corollary}{Corollary}
\newtheorem{proposition}{Proposition}
\newtheorem{assumption}{Assumption}
\title{Scaling up MIMO radar for target detection\vspace{-0.4cm}}
\name{\vspace{-0.4cm}Stefano Fortunati, Luca Sanguinetti, Maria Sabrina Greco, Fulvio Gini\thanks{This work has been partially supported by the Air Force Office of Scientific Research under award number FA9550-17-1-0344.}}
\address{Dipartimento di Ingegneria dell'Informazione, University of Pisa, Pisa, Italy.\vspace{-0.4cm}}
\begin{document}
%
\maketitle
\begin{abstract}
This work focuses on target detection in a colocated MIMO radar system. Instead of exploiting the ``classical' temporal domain, we propose to explore the spatial dimension (i.e., number of antennas $M$) to derive asymptotic results for the detector. Specifically, we assume no a priori knowledge of the statistics of the autoregressive data generating process and propose to use a mispecified Wald-type detector, which is shown to have an asymptotic $\chi$-squared distribution as $M\to\infty$. Closed-form expressions for the probabilities of false alarm and detection are derived. Numerical results are used to validate the asymptotic analysis in the finite system regime. It turns out that, for the considered scenario, the asymptotic performance is closely matched already for $M\ge 50$.    
\end{abstract}
%
\vspace{-0.2cm}

\section{Introduction}
\vspace{-0.2cm}
The first task of any multiple antenna radar system is to decide in favour of one of the two alternative hypotheses: $H_0)$ the target is absent; $H_1)$ the target is present. Given the observation vector $\mb{x}_k \in\mathbb{C}^N$ collected by the antenna array at time $k$, this detection problem can be formulated as a binary hypothesis test (HT):
\be
\label{HT_base_x}
\begin{array}[l]{clc}
	H_0: & \mb{x}_k = \mb{c}_k &  k = 1,\ldots,K \\
	H_1: & \mb{x}_k = {\alpha} \mb{v}+ \mb{c}_k & k = 1,\ldots,K
\end{array}
\ee
where $\mb{c}_k \in\mathbb{C}^N$ is the \textit{clutter} contribution. The signal of interest is $\alpha \mb{v}$, which is composed of a known vector $\mb{v}\in\mathbb{C}^N$ (called steering vector) and a deterministic, but \textit{unknown}, scalar $\alpha\in\mathbb{C}$. To discriminate between $H_0$ and $H_1$, we must define a detector $\Lambda(\mb{X})$ with $\mb{X} \triangleq [\mb{x}_1|\ldots,|\mb{x}_K]$ and then perform a test $\Lambda(\mb{X}) \overset{H_1}{\underset{H_0}{\gtrless}} \lambda$. In radar applications, $\lambda$ is usually chosen to maintain the probability of false alarm (PFA) below a pre-assigned level, say $ \overline{P_{FA}}$. Hence, $\lambda$ is computed as:
\be
\label{int_eq}
\Pr \graffe{\Lambda(\mb{X}) > \lambda | H_0} =  \overline{P_{FA}}.
\ee      
Solving the above equation is not an easy task. Moreover, the solution depends on $\Lambda(\mb{X})$ and on the statistical data model; that is, on the joint probability density function (pdf) $p_{\bs{X}}(\mb{X})$. Since $\alpha$ in \eqref{HT_base_x} is deterministic, $p_{\bs{X}}(\mb{X})$ is fully defined by the pdf $p_{\bs{C}}(\mb{C})$ of the clutter with $\mb{C} = [\mb{c}_1|\ldots|\mb{c}_K]$. In radar applications, the clutter contributions at different time instants are usually modelled as i.i.d. random vectors such that $p_{\bs{C}}(\mb{C}) = \prod\nolimits_{k=1}^{K}p_{C_N}(\mb{c}_k)$. 
In these circumstances, two popular choices for $\Lambda(\mb{X})$ are the generalized likelihood ratio (GLR) $\Lambda_{\mathsf{GLR}}(\mb{X})$ test and the Wald test $\Lambda_{\mathsf{W}}(\mb{X})$ (\cite[Ch. 9]{Cox}, \cite[Ch. 11]{kay1993fundamentalsII}). The popularity of both detectors is due to the fact that, under the hypothesis $H_0$ and $K\to \infty$, their pdfs converge to a central $\chi$-squared pdf with $2$ degrees of freedom \cite{LAN_Time_Series,asyn_time_series}. Hence, 
\eqref{int_eq} is asymptotically satisfied by $\bar{\lambda} = -2\ln \overline{P_{FA}}$. This is a particularly simple result that has received a lot of attention in the literature. However, it relies on two simplifying assumptions:
\begin{enumerate}
\item The target parameter $\alpha$ and the functional form of $p_{C_N}$ maintain constant over the observation interval.
\item The pdf $p_{C_N}$ is perfectly known.
\end{enumerate}
These two assumptions make the asymptotic analysis of $\Lambda_{\mathsf{GLR}}(\mb{X})$ and $\Lambda_{\mathsf{W}}(\mb{X})$ analytically tractable, but they are not realistic in practice. 

The main objective of this work is to develop a detector that does not rely on both assumptions while achieving the same simple asymptotic result illustrated above. Firstly, we assume $K=1$ and exploit the spatial (instead of temporal) dimension $N$ for the asymptotic analysis. This allows us to entirely drop the first assumption above. Observe that $\alpha$ remains constant over the array, while it may change over time. Secondly, we use the misspecification theory developed by Huber and White in their seminal papers \cite{Huber,white} (see also \cite{SPM}) to dispense from the knowledge of the true $p_{C_N}$. This is achieved by assuming a simpler, but \emph{misspecified}, pdf of the clutter model. Note that, unlike the classical temporal-based approach, the sample collected along the array cannot be considered as i.i.d. measurements; that is, a correlation structure has to be taken into account. Hence, the asymptotic analysis requires to use more advanced statistical tools such as those developed in \cite{white_nl_reg_dep,MMLE_dep,MMLE_time_series}. By putting together these different theories and tools we show that, under a general autoregressive assumption for the clutter generating process, it is possible to derive a \textit{misspecified} Wald-type (MW) test whose asymptotic pdf under $H_0$ is a central $\chi$-squared distribution irrespective of the true, but unknown, clutter pdf. The pdf of the proposed MW test under $H_1$ is also derived in closed form. Theoretical results will be also validated through simulations by assuming as clutter model an $\mathsf{AR(1)}$ driven by $t$-distributed innovations.

We observe that the target detection problem in large-scale radar system has been also recently discussed in \cite{Abla_conf,Abla,Large_scale}. Specifically, random matrix tools are used to get asymptotic results for the adaptive normalized matched filter for the regime in which both $M$ and $K$ go to infinity with a non-trivial ratio $M/K= c$. This is much different from this work where the temporal dimension $K$ is kept fixed.  

\vspace{-0.2cm}

\section{System model}
\label{system_model}
Consider a colocated MIMO radar system equipped with $M$ transmitting elements and $M$ receiving elements \cite{Stoica_col}. Without loss of generality, we assume a transmitting and e receiving uniformly linear arrays (ULAs) of $M$ omnidirectional antennas with $d_T$ and $d_R$ element separation and a single narrowband target impinging from the angle $\phi$. This array geometry implies a transmitting and receiving steering vector of the form $\mb{a}_T(\phi) = [1, e^{j2\pi \frac{d_T}{\lambda} \sin\phi}, \ldots, e^{j2\pi\frac{d_T}{\lambda}(M-1)\sin\phi}]^T$ and $\mb{a}_R(\phi) = [1, e^{j2\pi \frac{d_R}{\lambda} \sin\phi}, \ldots, e^{j2\pi\frac{d_R}{\lambda}(M-1)\sin\phi}]^T$, respectively. The transmitted signals are obtained from $M$ orthogonal baseband waveforms through a linear transformation with $\mb{W} \in \mathbb{C}^{M \times M}$.
%
Then, the matrix $\mb{X}\in \mathbb{C}^{M \times M}$ at the output of the matched filter, for a particular range-Doppler cell, is given by:
\be
\label{sig_mod}
\mb{X} = {\alpha} \mb{a}_R(\phi)\mb{a}_T^T(\phi)\mb{W}  + \mb{C}
\ee
where $\mb{C}\in \mathbb{C}^{M \times M}$ is the disturbance matrix. In vector form, we have that
\be
\label{vec_data_model}
\mathbb{C}^{N} \ni \mb{x} = \cvec{\mb{X}} = {\alpha} \mb{v} + \mb{c}
\ee 
where $N=M^2$ and $\mb{v} = (\mb{W}^T\mb{a}_T(\phi)) \otimes \mb{a}_R(\phi)$. Notice that \eqref{vec_data_model} is in the same of \eqref{HT_base_x} when $K=1$. Note that a MIMO radar provides more degrees of freedom compared  to a phased array for which we would have $N=M$. 

As mentioned earlier, a crucial prerequisite for any radar inference task is the modelization of the \emph{clutter} contribution $\mb{c} \triangleq [c_1,\ldots,c_N]^T$. In this paper we model it according to a stationary autoregressive model of order 1, denoted as $\mathsf{AR(1)}$ as done, e.g., in \cite{Decu}. Note that, in literature, autoregressive models has been expensively used to characterize the temporal correlation (see e.g. \cite{AR_1_1,AR_1_2}, \cite[Ch. 2]{Haykin_book}).
A stationary $\mathsf{AR(1)}$ process is a discrete-time random process such that:         
\be
\label{cn}
c_n = {\rho} c_{n-1} + w_n, \quad n \in (-\infty,\infty)
\ee
where the one-lag correlation coefficient ${\rho}=\rho_R + j \rho_I=|{\rho}|e^{\mathsf{j}2\pi\nu}$ satisfies $|{\rho}|<1$ and $\{w_n: \forall  n\}$ is the so called \textit{innovation process}. We assume that $\{w_n: \forall  n\}$ are \textit{circularly symmetric} independent and identically distributed (i.i.d.) complex random variables with finite second-order moments \cite{Pici_AR,Complex_AR1} such that $w_n \sim p_w$ and $E\{|w_n|^2\}=\sigma^2 < \infty$, for all $n$. The joint pdf of $\mb{c}$ is given by 
\begin{align}
\label{AR_1_c}
p_{C_N}(\mb{c}) = p_{c_1}(c_1)\prod\nolimits_{n=2}^{N}p_w(c_n-{\rho} c_{n-1})
\end{align}
and depends on the unknown pdf $p_w$ of the innovations $\{w_n: \forall  n\}$. The autocorrelation function (ACF) of the $\mathsf{AR(1)}$ process in \eqref{cn} is given by $R[m] = \frac{\sigma^2}{1-|\rho|^2}\rho^{|m|}$
and $S(\nu) = \sigma^2\left|1-\rho e^{-\mathsf{j}2\pi\nu}\right|^{-2}$ is the relative power spectral density \cite{ARMA_AR_Kay}.
In practice, an $\mathsf{AR(1)}$ is used for modelling a directional clutter; that is, a clutter whose power is focused on a particular angular direction specified by the phase of the complex one-lag correlation coefficient $\rho$.

Most of the literature assumes that $p_w$ is Gaussian for the sake of mathematical tractability. However, this is not the case in practical radar systems where heavier-tailed models, such as the $t$-distribution, are more appropriate \cite{Sang}. As a consequence, the performance of a detection algorithm derived under Gaussian assumption are no longer reliable when the actual innovations that generates the clutter share a non-Gaussian distribution. Motivated by this consideration, we propose a Wald-type detection algorithm derived under Gaussian assumption, but with the property of having reliable and predictable performance under any $p_w$.  

\section{HT under model misspecification}
\label{HT_missp}
The results of this paper builds upon the following assumption that characterizes a particular \textit{misspecification model}. 

\begin{assumption}[Misspecified gaussianity]\label{assumption1}
We assume that $\{c_n:\forall n\}$ in \eqref{cn} is an $\mathsf{AR(1)}$ model driven with 
\begin{align}
f_w(w_n) = (\pi\sigma^2_w)^{-1}e^{-|w_n|^2/\sigma^2_w}.
\end{align}
The true, but unknown, pdf $p_w$ is left unrestricted, except for a constraint on the finiteness of its second-order moments \cite{Pici_AR,Complex_AR1}; that is, $w_n \sim p_w$ and $E\{|w_n|^2\}={\sigma}^2 < \infty$, $\forall n$. 
\end{assumption}
For notational convenience, we denote ${\bs{\alpha}} = [\alpha_R,\alpha_I]^T$ the real representation of $\alpha$ and call ${\bs{\gamma}} = [\rho_R,\rho_I,\sigma^2]^T$
the nuisance vector. Then, we define 
\be
\bs{\theta} \triangleq \left[{\bs{\alpha}}^T,{\bs{\gamma}}^T\right]^T 
\ee
any \emph{tentative} parameter vector, while $\bar{\bs{\theta}}$ stand for the \emph{true} parameter vector underlying the data generating process. 

Under Assumption \ref{assumption1}, the misspecified pdf of the data vector $\mb{x}$ in \eqref{vec_data_model} has the parametric form $f_{X_N}(\mb{x};\bs{\theta}) = f_{C_N}(\mb{x}-\alpha \mb{v}; \bs{\gamma})$ where $f_{C_N}(\mb{c}_k;\bs{\gamma})$ is the misspecified parametric clutter pdf. 
%
%
%
The following proposition provides us with its closed-form expression (the proof can be found in Section \ref{App1}). 
\begin{proposition}
\label{Prep_fx}
If Assumption 1 holds true, $f_{X_N}(\mb{x};\bs{\theta})$ can be explicitly expressed as:
\be
\label{f_gauss}
f_{X_N}(\mb{x};\bs{\theta}) = g(x_1|\mu_1,s)\prod\nolimits_{n=2}^{N}g(x_n|\mu_n,\theta_5)
\ee
where the functional form of $g$ is $g(x|\mu,s) = \frac{1}{\pi s}e^{-\frac{|w-\mu|^2}{s}}$
with $\mu \in \mathbb{C}$ and $s \in \mathbb{R}_+$ given by 
\begin{align*} \label{mu} 
\mu_1 &= (\theta_1+j\theta_2) v_1 \\ 
\mu_n &= (\theta_1+j\theta_2) (v_n - (\theta_3+j\theta_4) v_{n-1}) + (\theta_3+j\theta_4) x_{n-1} 
\end{align*}
for $n=1, \ldots N$ and $s= \frac{\theta_5}{1-\theta^2_3-\theta^2_4}$
where $\theta_i$ indicates the $i$-th entry of the parameter vector $\bs{\theta}$ with $\dim(\bs{\theta})=5$.
\end{proposition}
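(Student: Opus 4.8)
The plan is to build on the relation $f_{X_N}(\mb{x};\bs{\theta}) = f_{C_N}(\mb{x}-\alpha\mb{v};\bs{\gamma})$ already established in the text, which holds because the map $\mb{c}\mapsto\mb{x}=\alpha\mb{v}+\mb{c}$ is a pure translation with unit Jacobian. Substituting the misspecified Gaussian innovation density $f_w$ of Assumption~\ref{assumption1} into the factorization \eqref{AR_1_c} then reduces the proof to identifying each factor after the component-wise substitution $c_n = x_n - \alpha v_n$. I would treat the $N-1$ conditional factors and the single initial marginal separately.

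For the conditional factors ($n\geq 2$) the work is purely algebraic. Writing the innovation under the shift gives $c_n - \rho c_{n-1} = (x_n-\alpha v_n) - \rho(x_{n-1}-\alpha v_{n-1})$, and regrouping yields $x_n - [\alpha(v_n-\rho v_{n-1}) + \rho x_{n-1}]$, which is exactly $x_n - \mu_n$ once the identifications $\alpha = \theta_1 + j\theta_2$ and $\rho = \theta_3 + j\theta_4$ are used. Inserting this into $f_w(w) = (\pi\theta_5)^{-1}e^{-|w|^2/\theta_5}$ immediately produces the factor $g(x_n|\mu_n,\theta_5)$.

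The delicate step is the $n=1$ marginal, since $c_1$ is not an innovation but the steady-state output of the recursion \eqref{cn}, so its misspecified distribution must be obtained from stationarity rather than read off directly. Here I would invoke the MA($\infty$) representation $c_1 = \sum_{k\geq 0}\rho^k w_{-k}$: as the $w_n$ are assumed independent circularly-symmetric complex Gaussian with variance $\theta_5$, the sum $c_1$ is again circularly-symmetric complex Gaussian, with variance equal to the zero-lag ACF $R[0] = \theta_5/(1-|\rho|^2) = \theta_5/(1-\theta_3^2-\theta_4^2) = s$. Thus $p_{c_1}(c_1) = (\pi s)^{-1}e^{-|c_1|^2/s}$, and since $\mu_1 = \alpha v_1$ gives $c_1 = x_1-\mu_1$, this is precisely $g(x_1|\mu_1,s)$. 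Assembling the initial marginal with the $N-1$ conditional factors reproduces \eqref{f_gauss}. I expect the main obstacle to be exactly this initial term: one must justify that stationarity forces the inflated variance $s$ rather than the innovation variance $\theta_5$, and check that the resulting marginal is compatible with the conditional factors so that $f_{X_N}$ integrates to one; the remaining steps are routine translation-invariance and regrouping arguments.
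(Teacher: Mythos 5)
Your proposal is correct and follows essentially the same route as the paper's proof: both factor the joint misspecified pdf into the stationary Gaussian marginal of the first sample times the Gaussian conditional factors, with the shift $c_n = x_n - \alpha v_n$ producing the means $\mu_n$ and stationarity forcing the inflated variance $s = \theta_5/(1-\theta_3^2-\theta_4^2)$. The only cosmetic differences are that the paper passes through the real representation of the complex process and invokes the stationarity property directly, whereas you work in complex notation and justify the stationary marginal via the $\mathrm{MA}(\infty)$ representation.
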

Once $f_{X_N}(\mb{x};\bs{\theta})$ is computed, the next step is the implementation of a detector whose asymptotic distribution as $N\to \infty$ can be computed. A solution might be the adoption of a misspecified GLR (M-GLR) statistic, given by
\be
\label{GLR_miss}
\Lambda_{\mathsf{M-GLR}}(\mb{x}) \triangleq 2 \ln \tonde{ \frac{f_{X_N}(\mb{x};\hat{\bs{\theta}})}{\max_{\bs{\gamma}}f_{X_N}(\mb{x};[\mb{0}^T_2,\bs{\gamma}^T]^T)}}
\ee  
with $\hat{\bs{\theta}}$ being the misspecified ML (M-ML) estimate of $\bar{\bs{\theta}}$ \cite{SPM}:
\begin{align}\label{theta}
\hat{\bs{\theta}} &= \underset{\bs{\theta}}{\mathrm{argmax}} \quad f_{X_N}(\mb{x};\bs{\theta}), \quad \mb{x} \sim p_{X_N}.
\end{align}  
However, in \cite[Theo. 3.1]{Kent}, Kent proved (for the i.i.d. case) that the asymptotic distribution of $\Lambda_{\mathsf{M-GLR}}(\mb{x})$ depends on both the true $p_{X_N}$ and assumed $f_{X_N}$ data pdfs and consequently, it cannot be used to solve \eqref{int_eq} in practice.
Hence, the question is: \emph{is it possible to find a detector whose asymptotic distribution is independent of the true, but unknown, data pdf $p_{X_N}$?} The answer is positive, and the resulting detector is a misspecified Wald test. This is addressed in the following.  
\vspace{-0.2cm}
\section{Main results}
\vspace{-0.2cm}
As a prerequisite for the definition of the misspecified Wald (MW) test, we need to introduce some notation and, more importantly, to study the asymptotic properties of the MML estimator in \eqref{theta} under dependent data. Following \cite{MMLE_time_series}, we introduce 
\begin{align}
\label{H_mat_1}
\mathbb{R}^{5 \times 5} \ni\mb{H}_1(\bar{\bs{\theta}}) &\triangleq \nabla^T_{\bs{\theta}}\mb{s}_1(\bar{\bs{\theta}})\\
\label{H_mat}
\mathbb{R}^{5 \times 5} \ni\mb{H}_n(\bar{\bs{\theta}}) &\triangleq E_{p_{x_n|x_{n-1}}}\graffe{\nabla^T_{\bs{\theta}}\mb{s}_n(\bar{\bs{\theta}})|x_{n-1}}
\end{align}
for $n=2,\ldots,N$ where
\begin{align}
\mb{s}_1(\bs{\theta}) &\triangleq \nabla_{\bs{\theta}}\ln g(x_1|\mu_1(\bs{\theta}),s(\bs{\theta}))\\
\mb{s}_n(\bs{\theta}) &\triangleq \nabla_{\bs{\theta}}  \ln g(x_n|\mu_n(\bs{\theta},x_{n-1}),\theta_5)
\end{align}
are the \textit{score vectors} while $g(\cdot)$ has been defined in Proposition \ref{Prep_fx}. Under Assumption \ref{assumption1} closed form expressions of the above quantities can be computed as shown in Sections \ref{App2} and \ref{App3}. We also define 
\be
\mb{C}_{{\bar{\bs{\theta}}}} \triangleq \mb{A}^{-1}_{{\bar{\bs{\theta}}}}\mb{B}_{{\bar{\bs{\theta}}}}\mb{A}^{-1}_{{\bar{\bs{\theta}}}}
\ee
with
\begin{align}
&\mb{A}_{{\bar{\bs{\theta}}}} \triangleq \frac{1}{N}\sum\limits_{n=1}^{N}E_{p_{X_N}}\graffe{\mb{H}_n({\bar{\bs{\theta}}})}\\
&\mb{B}_{{\bar{\bs{\theta}}}} \triangleq \frac{1}{N}\sum_{n=1}^N E_{p_{X_N}}\graffe{\mb{s}_n({\bar{\bs{\theta}}})\mb{s}_n^H({\bar{\bs{\theta}}})}.
\end{align} 
Next, we provide the main results of this work for the asymptotic regime where $M\to \infty$. Clearly, this implies that $N\to\infty$ since $N=M^2$ by definition.
\subsection{Asymptotic analysis of the MML estimator }
The asymptotic properties of the MML estimator in \eqref{theta} are as follows. 
\begin{theorem}
	\label{Theo_MML_dep}
	If Assumption \ref{assumption1} holds true, the MML estimator $\hat{\bs{\theta}}$ in \eqref{theta} is consistent, i.e. $	\hat{\bs{\theta}}\overset{a.s.}{\underset{M\rightarrow \infty}{\rightarrow}}{\bar{\bs{\theta}}}$,
	and asymptotically normal, i.e.
	\begin{align}
			\label{asym_norm_gauss}
		\sqrt{N}\mb{C}^{-1/2}_{\bar{\bs{\theta}}}(\hat{\bs{\theta}}-{\bar{\bs{\theta}}}) \underset{M\rightarrow \infty}{\sim} \mathcal{N}\tonde{\mb{0},\mb{I}_5}.
	\end{align}
	Moreover, we have that
	\begin{align}
	\mb{A}_N(\hat{\bs{\theta}})\triangleq\frac{1}{N} \sum_{n=1}^{N} \mb{H}_n(	\hat{\bs{\theta}}) &\overset{p_{X_N}}{\underset{M\to \infty}{\rightarrow}} 	\mb{A}_{{\bar{\bs{\theta}}}} \\
	\mb{B}_N(\hat{\bs{\theta}})\triangleq\frac{1}{N}  \sum_{n=1}^{N} \mb{s}_n(\hat{\bs{\theta}})\mb{s}_n^H(\hat{\bs{\theta}}) &\overset{p_{X_N}}{\underset{M\to \infty}{\rightarrow}} \mb{B}_{{\bar{\bs{\theta}}}}.
	\end{align}
	such that a direct application of the Slutsky's Lemma yields
	\be
	\label{conv_C_est}
	\mb{C}_N(\hat{\bs{\theta}}) \triangleq \mb{A}_N^{-1}(\hat{\bs{\theta}})\mb{B}_N(\hat{\bs{\theta}})\mb{A}_N^{-1}(\hat{\bs{\theta}}) \overset{p_{X_N}}{\underset{M\rightarrow \infty}{\rightarrow}} \mb{C}_{{\bar{\bs{\theta}}}}.
	\ee
\end{theorem}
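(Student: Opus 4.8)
The statement specializes, to the Gaussian quasi-likelihood induced by Assumption \ref{assumption1} and Proposition \ref{Prep_fx}, the general theory of misspecified maximum likelihood estimation for dependent data in \cite{white_nl_reg_dep,MMLE_dep,MMLE_time_series}. The plan is therefore to verify that the regularity conditions required by those results hold for the model at hand and then to invoke them. The single structural fact that drives everything is that, since $|\rho|<1$, the $\mathsf{AR(1)}$ process $\{c_n\}$ in \eqref{cn} --- and hence $\{x_n\}$ --- is strictly stationary, ergodic and geometrically strongly mixing; this is what legitimizes the (uniform) laws of large numbers and the central limit theorem used below. Throughout, I would work on a compact parameter set containing $\bar{\bs{\theta}}$ in its interior, and assume the mild smoothness and moment conditions on $p_w$ (in particular, finiteness of the relevant higher-order moments of the innovations, so that $\mb{A}_{\bar{\bs{\theta}}}$ and $\mb{B}_{\bar{\bs{\theta}}}$ exist and are finite and $\mb{A}_{\bar{\bs{\theta}}}$ is nonsingular) under which the cited theorems apply.

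For \emph{consistency}, I would write $\hat{\bs{\theta}}$ in \eqref{theta} as the maximizer of the normalized log-likelihood $N^{-1}\ln f_{X_N}(\mb{x};\bs{\theta})$, which by \eqref{f_gauss} is an average of per-sample contributions $\ln g(x_n\,|\,\mu_n(\bs{\theta},x_{n-1}),\theta_5)$, each a fixed measurable function of the pair $(x_n,x_{n-1})$. By the ergodic theorem this average converges a.s., for every fixed $\bs{\theta}$, to a limit criterion $\ell(\bs{\theta})$, and a standard domination/continuity argument upgrades this to uniform convergence over the compact set. The decisive point is that the Gaussian working model correctly specifies the conditional mean and the conditional variance of $x_n$ given $x_{n-1}$ (only the \emph{shape} of the innovation law is misspecified), so $\ell(\bs{\theta})$ is uniquely maximized at the true $\bar{\bs{\theta}}$ --- i.e. the pseudo-true parameter coincides with $\bar{\bs{\theta}}$, the usual robustness property of Gaussian quasi-likelihood. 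The standard argmax (M-estimation) argument then gives $\hat{\bs{\theta}}\overset{a.s.}{\to}\bar{\bs{\theta}}$.

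For \emph{asymptotic normality}, I would start from the first-order condition $\sum_{n=1}^{N}\mb{s}_n(\hat{\bs{\theta}})=\mb{0}$ and apply a mean-value expansion about $\bar{\bs{\theta}}$, obtaining $\sqrt{N}(\hat{\bs{\theta}}-\bar{\bs{\theta}}) = -\tonde{N^{-1}\sum_n\nabla^T_{\bs{\theta}}\mb{s}_n(\tilde{\bs{\theta}})}^{-1}\tonde{N^{-1/2}\sum_n\mb{s}_n(\bar{\bs{\theta}})}$ for some $\tilde{\bs{\theta}}$ between $\hat{\bs{\theta}}$ and $\bar{\bs{\theta}}$. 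The Hessian factor converges, by the ergodic theorem together with the consistency just established and uniform convergence, to $\mb{A}_{\bar{\bs{\theta}}}$. The crux is the score factor $N^{-1/2}\sum_n\mb{s}_n(\bar{\bs{\theta}})$, which cannot be handled by an i.i.d.\ CLT because the samples are dependent. Here the key observation is that, since the conditional mean and variance are correctly specified, the score evaluated at $\bar{\bs{\theta}}$ satisfies $E_{p_{X_N}}\graffe{\mb{s}_n(\bar{\bs{\theta}})\,|\,x_{n-1}}=\mb{0}$ with respect to the natural filtration (by the Markov property): the location-parameter components are proportional to $x_n-\bar{\mu}_n=w_n$, which is conditionally zero-mean, while the $\theta_5$-component has conditional mean $-\theta_5^{-1}+\sigma^2\theta_5^{-2}$, which vanishes at $\bar{\theta}_5=\sigma^2$. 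Thus $\{\mb{s}_n(\bar{\bs{\theta}})\}$ is a stationary martingale-difference sequence, all lagged cross-covariances vanish (which is precisely why $\mb{B}_{\bar{\bs{\theta}}}$ contains only the lag-zero outer products), and a martingale CLT delivers $N^{-1/2}\sum_n\mb{s}_n(\bar{\bs{\theta}})\to\mathcal{N}(\mb{0},\mb{B}_{\bar{\bs{\theta}}})$; the single initial term $\mb{s}_1$ contributes only $O(N^{-1/2})$ and is negligible. Slutsky's lemma then yields $\sqrt{N}(\hat{\bs{\theta}}-\bar{\bs{\theta}})\to\mathcal{N}(\mb{0},\mb{A}_{\bar{\bs{\theta}}}^{-1}\mb{B}_{\bar{\bs{\theta}}}\mb{A}_{\bar{\bs{\theta}}}^{-1})=\mathcal{N}(\mb{0},\mb{C}_{\bar{\bs{\theta}}})$, i.e. \eqref{asym_norm_gauss}.

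Finally, $\mb{A}_N(\hat{\bs{\theta}})$ and $\mb{B}_N(\hat{\bs{\theta}})$ are averages of the continuous matrix-valued functions $\mb{H}_n(\cdot)$ and $\mb{s}_n(\cdot)\mb{s}_n^H(\cdot)$ of $(x_{n-1},x_n)$, so I would once more apply a uniform law of large numbers to get $N^{-1}\sum_n\mb{H}_n(\bs{\theta})\to\mb{A}_{\bar{\bs{\theta}}}$ and $N^{-1}\sum_n\mb{s}_n(\bs{\theta})\mb{s}_n^H(\bs{\theta})\to\mb{B}_{\bar{\bs{\theta}}}$ uniformly in a neighbourhood of $\bar{\bs{\theta}}$; plugging in the consistent $\hat{\bs{\theta}}$ and using the continuous-mapping theorem gives the two stated convergences, and since inversion and products are continuous at the nonsingular $\mb{A}_{\bar{\bs{\theta}}}$, one more Slutsky step yields \eqref{conv_C_est}. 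The main obstacle in the whole argument is the score CLT of the third step: once its martingale-difference structure is recognized, the remainder is a by-now-standard verification of the regularity conditions of the dependent-data misspecified-ML theorems of \cite{MMLE_time_series}.
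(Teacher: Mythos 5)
Your proposal is correct and takes essentially the same route as the paper: the paper's entire proof is a one-line citation to the dependent-data misspecified-ML results of \cite{white_nl_reg_dep,MMLE_dep} and \cite[Theo. 2.1]{MMLE_time_series}, which is exactly the theory you specialize. The detail you supply --- stationarity/ergodicity of the $\mathsf{AR(1)}$, the identification step showing the pseudo-true parameter equals $\bar{\bs{\theta}}$ because the conditional mean and variance are correctly specified, and the martingale-difference CLT for the score (with the higher-order moment conditions needed for $\mb{B}_{\bar{\bs{\theta}}}$ to be finite) --- is precisely the condition-checking that the paper delegates to those references.
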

\begin{proof}
The proof follows from the results obtained in \cite{MMLE_dep,white_nl_reg_dep} and \cite[Theo. 2.1]{MMLE_time_series}.
\end{proof}
The main implications of Theorem \ref{Theo_MML_dep} can be summarized as follows. If the true data generating process is a stationary $\mathsf{AR}(1)$ characterized by a parameter vector $\bar{\bs{\theta}}$, and driven by i.i.d. innovations with unspecified pdf $p_w$, then the ML estimator derived under a misspecified Gaussian assumption converges (\textit{a.s.}) to the true $\bar{\bs{\theta}}$, and it is asymptotically normal independently of the true, but unknown, $p_w$. Of course, the misspecification results into a loss in terms of estimation accuracy, which is quantified by the matrix $\mb{C}_{{\bar{\bs{\theta}}}}$. 

In addition to all this, Theorem \ref{Theo_MML_dep} is the cornerstone for the derivation of the misspecified Wald test presented next.
 \vspace{-0.2cm}
\subsection{Asymptotic analysis of the misspecified Wald statistic}
Inspired by \cite{MMLE_time_series}, we consider the following MW detector
\begin{align}
\label{Wald_test}
	\Lambda_{\mathsf{MW}}(\mb{x}) &= N{\hat{\bs{\alpha}}}^T\tonde{\mb{J}\mb{C}_N(\hat{\bs{\theta}})\mb{J}^T}^{-1}\!\!\!{\hat{\bs{\alpha}}}
\end{align}
where $\mb{J} = \quadre{\mb{I}_2,\mb{0}_3}$ and $N=M^2$. The asymptotic properties of the MML estimator provided in Theorem \ref{Theo_MML_dep} are the key to study the asymptotic distributions of \eqref{Wald_test} under $H_0$ and $H_1$.
\begin{theorem}
	\label{Theo_MW}
	If Assumption \ref{assumption1} holds true, then	\begin{align}
	\Lambda_{\mathsf{MW}}(\mb{x}|H_0) \underset{M\rightarrow \infty}{\sim} \chi_2^2(0)\label{MW_H0}\\
			\Lambda_{\mathsf{MW}}(\mb{x}|H_1) \underset{M\rightarrow \infty}{\sim} \chi_2^2\tonde{\delta}
	\end{align}
	with 
	\begin{align}\label{delta}
	\delta \triangleq N\bar{\bs{\alpha}}^T\tonde{\mb{J}\mb{C}_{\bar{\bs{\theta}}_{H_0}}\mb{J}^T}^{-1}\bar{\bs{\alpha}}
	\end{align}
	and $\bar{\bs{\theta}}_{H_0} \triangleq \left[{\mb{0}}_2^T,{\bar{\bs{\gamma}}}^T\right]^T$.
\end{theorem}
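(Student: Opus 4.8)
The plan is to reduce the claim to the classical fact that a quadratic form built from an asymptotically Gaussian vector and a consistent estimate of its inverse covariance converges to a $\chi^2$ law, drawing both the limiting Gaussian and the consistency of the plug-in covariance from Theorem~\ref{Theo_MML_dep}. Since $\hat{\bs{\alpha}} = \mb{J}\hat{\bs{\theta}}$ and $\bar{\bs{\alpha}} = \mb{J}\bar{\bs{\theta}}$, and since linear images of asymptotically Gaussian vectors are asymptotically Gaussian, applying $\mb{J}$ to the asymptotic normality statement \eqref{asym_norm_gauss} gives
\be
\sqrt{N}\,\tonde{\hat{\bs{\alpha}} - \bar{\bs{\alpha}}} \underset{M\to\infty}{\sim} \mathcal{N}\tonde{\mb{0}, \mb{J}\mb{C}_{\bar{\bs{\theta}}}\mb{J}^T}.
\ee
In parallel, \eqref{conv_C_est} and the continuity of matrix inversion on the nonsingular matrices yield $\tonde{\mb{J}\mb{C}_N(\hat{\bs{\theta}})\mb{J}^T}^{-1} \to \tonde{\mb{J}\mb{C}_{\bar{\bs{\theta}}}\mb{J}^T}^{-1}$ in $p_{X_N}$-probability. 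These are the only two ingredients I need.

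Under $H_0$ the true parameter is $\bar{\bs{\theta}} = \bar{\bs{\theta}}_{H_0}$, hence $\bar{\bs{\alpha}} = \mb{0}$ and the display reduces to $\sqrt{N}\hat{\bs{\alpha}} \sim \mathcal{N}(\mb{0}, \mb{J}\mb{C}_{\bar{\bs{\theta}}_{H_0}}\mb{J}^T)$. Writing $\mb{z}_N \triangleq \sqrt{N}\hat{\bs{\alpha}}$, the statistic \eqref{Wald_test} is exactly $\mb{z}_N^T \tonde{\mb{J}\mb{C}_N(\hat{\bs{\theta}})\mb{J}^T}^{-1}\mb{z}_N$; replacing the estimated middle matrix by its probability limit through Slutsky's lemma and invoking the standard identity that $\mb{z}^T\bs{\Sigma}^{-1}\mb{z}\sim\chi^2$ with $\dim(\mb{z})=2$ degrees of freedom for $\mb{z}\sim\mathcal{N}(\mb{0},\bs{\Sigma})$ gives \eqref{MW_H0}.

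Under $H_1$ one has $\bar{\bs{\alpha}}\neq\mb{0}$, and I would split $\sqrt{N}\hat{\bs{\alpha}} = \sqrt{N}(\hat{\bs{\alpha}}-\bar{\bs{\alpha}}) + \sqrt{N}\bar{\bs{\alpha}}$, so that $\sqrt{N}\hat{\bs{\alpha}}$ is asymptotically Gaussian with covariance $\mb{J}\mb{C}_{\bar{\bs{\theta}}}\mb{J}^T$ and deterministic mean $\sqrt{N}\bar{\bs{\alpha}}$; the quadratic-form identity then yields a \emph{noncentral} $\chi_2^2$ with noncentrality $N\bar{\bs{\alpha}}^T(\mb{J}\mb{C}_{\bar{\bs{\theta}}}\mb{J}^T)^{-1}\bar{\bs{\alpha}}$. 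To recover exactly \eqref{delta}, I must show that the $\bs{\alpha}$-block $\mb{J}\mb{C}_{\bar{\bs{\theta}}}\mb{J}^T$ does not depend on the value of $\bar{\bs{\alpha}}$, so that it may be replaced by $\mb{J}\mb{C}_{\bar{\bs{\theta}}_{H_0}}\mb{J}^T$. Here I would use the closed forms from Sections~\ref{App2} and \ref{App3}: because $\bs{\alpha}$ enters $g$ only through $\mu_n$ and $\mu_n$ is affine in $\bs{\alpha}$, the gradients $\partial\mu_n/\partial\bs{\alpha}$ are functions of $\mb{v}$ and $\bs{\gamma}$ alone; moreover the conditional mean is correctly specified at $\bar{\bs{\theta}}$, so that $E_{p}\{x_n-\mu_n(\bar{\bs{\theta}})\mid x_{n-1}\}=0$. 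These two facts make the cross blocks of $\mb{A}_{\bar{\bs{\theta}}}$ between $\bs{\alpha}$ and $\bs{\gamma}$ vanish (so $\mb{A}_{\bar{\bs{\theta}}}$ is block diagonal) and reduce the residual variance appearing in $\mb{B}_{\bar{\bs{\theta}}}$ to the true $\sigma^2$; consequently the top-left block of $\mb{C}_{\bar{\bs{\theta}}}=\mb{A}^{-1}_{\bar{\bs{\theta}}}\mb{B}_{\bar{\bs{\theta}}}\mb{A}^{-1}_{\bar{\bs{\theta}}}$ equals $\mb{A}_{\alpha\alpha}^{-1}\mb{B}_{\alpha\alpha}\mb{A}_{\alpha\alpha}^{-1}$ with every factor free of $\bar{\bs{\alpha}}$, which delivers $\delta$ as in \eqref{delta}.

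The step I expect to be the main obstacle is exactly this structural identity $\mb{J}\mb{C}_{\bar{\bs{\theta}}}\mb{J}^T = \mb{J}\mb{C}_{\bar{\bs{\theta}}_{H_0}}\mb{J}^T$, i.e. the block-diagonality of $\mb{A}_{\bar{\bs{\theta}}}$ and the $\bar{\bs{\alpha}}$-invariance of the $\bs{\alpha}$ blocks, since it is what lets the $H_0$ covariance appear in the noncentrality and it is the only place where the detailed score and Hessian expressions of the misspecified model actually enter. A secondary subtlety is that, for a fixed alternative, $\sqrt{N}\bar{\bs{\alpha}}$ and hence $\delta$ grow with $N$; the noncentral-$\chi_2^2$ conclusion is therefore to be read as the finite-$M$ approximation obtained by freezing $N$ in the mean-shift decomposition, equivalently as the exact weak limit under the contiguous reparametrization $\bar{\bs{\alpha}}=\bs{\alpha}_0/\sqrt{N}$.
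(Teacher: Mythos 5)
Your proof is correct, and its $H_0$ part coincides with the paper's (asymptotic normality of $\sqrt{N}\hat{\bs{\alpha}}$ from Theorem~\ref{Theo_MML_dep}, consistency of the plug-in matrix $\mb{C}_N(\hat{\bs{\theta}})$, Slutsky, quadratic-form identity). Under $H_1$, however, you take a genuinely different route from the paper. The paper never proves, nor needs, the structural identity $\mb{J}\mb{C}_{\bar{\bs{\theta}}}\mb{J}^T = \mb{J}\mb{C}_{\bar{\bs{\theta}}_{H_0}}\mb{J}^T$ that you single out as the main obstacle: it works under \emph{local alternatives} $\bs{\alpha} = \mb{d}/\sqrt{N}$, so the true parameter sequence $\bar{\bs{\theta}}_{H_1}^{(N)}$ converges to $\bar{\bs{\theta}}_{H_0}$, and mere \emph{continuity} of $\bs{\theta}\mapsto\mb{A}_{\bs{\theta}},\mb{B}_{\bs{\theta}}$ near $\bar{\bs{\theta}}_{H_0}$ forces $\mb{C}_{\bar{\bs{\theta}}_{H_1}^{(N)}}\to\mb{C}_{\bar{\bs{\theta}}_{H_0}}$; the noncentrality $\mb{d}^T(\mb{J}\mb{C}_{\bar{\bs{\theta}}_{H_0}}\mb{J}^T)^{-1}\mb{d}$ then appears automatically, and $\delta$ in \eqref{delta} is recovered by the substitution $\mb{d}=\sqrt{N}\bar{\bs{\alpha}}$ — exactly the contiguous reparametrization you mention only as a closing remark. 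Your fixed-alternative route instead requires the algebraic invariance of the $\bs{\alpha}$-block of $\mb{C}_{\bar{\bs{\theta}}}$ in $\bar{\bs{\alpha}}$, and your sketch of it is sound and completable from Sections~\ref{App2}--\ref{App3}: at the true parameter the residual is $\varepsilon_n(\bar{\bs{\theta}})=w_n$ and the $\rho$-gradient of $\mu_n$ equals $x_{n-1}-\bar{\alpha}v_{n-1}=c_{n-1}$, which is zero-mean, so the $\bs{\alpha}$--$\bs{\gamma}$ cross blocks of $\mb{A}_{\bar{\bs{\theta}}}$ vanish and every surviving block is a functional of $(\mb{v},\bar{\bs{\gamma}},c_{n-1},w_n)$ only, hence free of $\bar{\bs{\alpha}}$. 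What each approach buys: the paper's is shorter and rests only on a continuity hypothesis, but its extension from local to fixed alternatives is heuristic (the substitution $\mb{d}=\sqrt{N}\bar{\bs{\alpha}}$); yours demands the extra block computation but yields the stronger structural fact that the asymptotic covariance of $\hat{\bs{\alpha}}$ is the same under the null and under any fixed alternative, which is what makes the $\chi_2^2(\delta)$ power approximation meaningful beyond the contiguous regime. The one caveat — which you correctly flag yourself — is that for fixed $\bar{\bs{\alpha}}$ the mean shift $\sqrt{N}\bar{\bs{\alpha}}$ diverges, so your noncentral limit must be read as an approximation (or re-derived under local alternatives, as the paper does) to be a theorem in the weak-convergence sense.
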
 
\begin{proof}
	The proof can be found in Section \ref{App4}.
\end{proof}
\vspace{-0.2cm}
Interestingly, the above theorem shows that the pdfs of \eqref{Wald_test} under $H_0$ and $H_1$ converge to $\chi$-squared pdfs with $2$ degrees of freedom when $M\to\infty$. Unlike \cite{LAN_Time_Series,asyn_time_series}, this is achieved without any a priori knowledge on the pdf of the clutter model, but relies only on the mild conditions provided in Assumption 1. In particular, it follows that \eqref{int_eq} is asymptotically satisfied by $\bar{\lambda} = -2\ln \overline{P_{FA}}$. This is valid for any pre-assigned $\overline{P_{FA}}$ and, more importantly, for any true, but unknown, pdf of the innovations. In addition to this, the following corollary can be proved.
\begin{corollary}
	\label{cor}
	If Assumption \ref{assumption1} holds true, then the probability of detection of \eqref{Wald_test} is such that
	\be
	\label{closed_form_ROC}
	P_D(\lambda) \to_{M\to\infty} Q_1\tonde{\sqrt{\delta},\sqrt{\lambda}}
	\ee 
	 where $Q_1(\cdot,\cdot)$ is the Marcum $Q$ function of order 1 and $\delta$ is given by \eqref{delta}.
\end{corollary}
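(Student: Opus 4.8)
The plan is to combine the distributional result of Theorem \ref{Theo_MW} under $H_1$ with the classical integral identity that links the non-central $\chi^2$ survival function to the Marcum $Q$ function. First I would write the probability of detection explicitly as the complementary CDF of the test statistic, $P_D(\lambda) = \Pr\{\Lambda_{\mathsf{MW}}(\mb{x}) > \lambda \mid H_1\}$. Since Theorem \ref{Theo_MW} establishes that $\Lambda_{\mathsf{MW}}(\mb{x}|H_1)$ converges in distribution to $\chi_2^2(\delta)$ as $M\to\infty$, and the limiting CDF of a non-central $\chi^2$ law is continuous everywhere, convergence in distribution (equivalently, the Portmanteau theorem) guarantees that $P_D(\lambda) = 1 - F_{\Lambda_{\mathsf{MW}}}(\lambda)$ converges to the survival function of $\chi_2^2(\delta)$ evaluated at $\lambda$, for every fixed threshold $\lambda > 0$.

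The core of the argument is then to evaluate this limiting survival function in closed form. I would start from the density of a $\chi_2^2(\delta)$ random variable, namely $f(y) = \tfrac{1}{2}e^{-(y+\delta)/2} I_0(\sqrt{\delta y})$ for $y \ge 0$, where $I_0(\cdot)$ is the modified Bessel function of the first kind of order zero; this is the $k=2$ specialization of the general non-central $\chi^2$ density, in which the prefactor $(y/\delta)^{k/4-1/2}$ collapses to unity and the Bessel order $k/2-1$ becomes zero. The survival function is $\int_\lambda^\infty f(y)\,dy$. Applying the change of variable $y = x^2$, $dy = 2x\,dx$, maps the lower limit to $\sqrt{\lambda}$ and turns the integrand into $x\,e^{-(x^2+\delta)/2} I_0(\sqrt{\delta}\,x)$. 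Recognizing the resulting integral as the definition of the first-order Marcum $Q$ function, $Q_1(a,b) = \int_b^\infty x\,e^{-(x^2+a^2)/2} I_0(ax)\,dx$, evaluated at $a = \sqrt{\delta}$ and $b = \sqrt{\lambda}$, yields $P_D(\lambda) \to Q_1(\sqrt{\delta},\sqrt{\lambda})$, which is precisely \eqref{closed_form_ROC}.

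I do not expect a genuine obstacle here, since the result rests on a well-known distributional identity rather than on new technical machinery; the heavy lifting was already done in Theorem \ref{Theo_MW}. The only point requiring care is the interchange of limit and probability in the first step: one must invoke convergence of the CDFs at a continuity point of the limit law, which is legitimate because the non-central $\chi^2$ distribution is absolutely continuous, so that \emph{every} threshold $\lambda$ is a continuity point and the finite-$M$ detection probability converges to the survival function of the limiting distribution. Once this measure-theoretic justification is secured, the Bessel-function substitution is entirely routine.
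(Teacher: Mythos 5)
Your proof is correct and follows essentially the same route as the paper: both combine the $H_1$ limit from Theorem \ref{Theo_MW} with the identity expressing the survival function of a non-central $\chi^2$ law with $2$ degrees of freedom as the first-order Marcum $Q$ function. The only difference is cosmetic---the paper simply cites this identity, whereas you derive it explicitly via the substitution $y=x^2$ and also spell out the Portmanteau-type justification for passing from convergence in distribution to convergence of the tail probabilities, both of which the paper leaves implicit.
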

Since $Q_1(\cdot,\cdot)$ is monotonic in its first argument, Corollary \ref{cor} states that the $P_D$ of the MW test in \eqref{Wald_test} goes to 1 as $M\to\infty$. Moreover, it shows that the $P_D$ depends on the true, but unknown, pdf of the innovations through the matrix $\mb{C}_{\bar{\bs{\theta}}_{H_0}}$ in \eqref{delta}. This is different from the asymptotic expression of the $P_{FA}$ that is invariant to the misspecification of $p_w$. We conclude by noticing that, even if the results of Theorem \ref{Theo_MW} and Corollary \ref{cor} are asymptotic in nature, in practice they are satisfied by \virg{practically reasonable} numbers of antennas. Indeed, in the next section numerical results are used to show that the asymptotic regime is reached already for $M = 50$.     
\vspace{-0.3cm}
\section{Numerical validation}
\vspace{-0.2cm}
Monte Carlo simulations are now used to validate the theoretical results of Theorem \ref{Theo_MW} for a MIMO radar system with a finite number of antennas. The data vector $\mb{x}$ in \eqref{vec_data_model} is generated as follows. Under the hypothesis $H_0$, we have that $\mb{x}=\mb{c}$ where $\mb{c}$ is generated according to the $\mathsf{AR(1)}$ process in \eqref{cn} with $\rho = |\rho|e^{j2\pi\nu_c}$ and $\nu_c=0.23$. The innovations $\{w_n,\forall n\}$ share a complex $t$-distribution of the form \cite{Sang,Esa}:
\be
p_{w}(w_n) = (\sigma^2 \pi )^{-1} \lambda (\lambda/\eta)^{\lambda}(\lambda/\eta + |w_n|^2/ \sigma^2)^{-(\lambda+1)}
\ee
where $\lambda \in (1,\infty)$ and $\eta = \lambda/\sigma^2(\lambda-1)$ are the shape and scale parameters. In particular, $\lambda$ controls the tails of $p_{w}$. If $\lambda$ is close to 2, then $p_{w}$ is heavy-tailed and highly non Gaussian. On the other hand, if $\lambda \rightarrow \infty$, then $p_{w}$ collapses to the Gaussian distribution. We chose $\lambda = 3$ and $\sigma^2 = 1$. Under the hypothesis $H_1$, $\mb{x}=\alpha\mb{v}+\mb{c}$ where $\mb{c}$ is generated as before, while $[\mb{v}]_n = e^{j\pi(n-1)\sin(\phi)}, n = 1,\ldots,M^2$ and $\phi = \arcsin(\nu/2)$ where $\nu = 0.25$. Note that, by choosing $\nu = 0.25$ the target comes from an angular direction, which is very close to the peak of the clutter power. 
The target term $\alpha$ is generated such that the signal-to-noise ratio is $-10\mathrm{dB}$. 


\begin{figure}[t!]\vspace{-0.8cm} 
\begin{center}
\begin{overpic}[unit=1mm,width=.95\columnwidth]{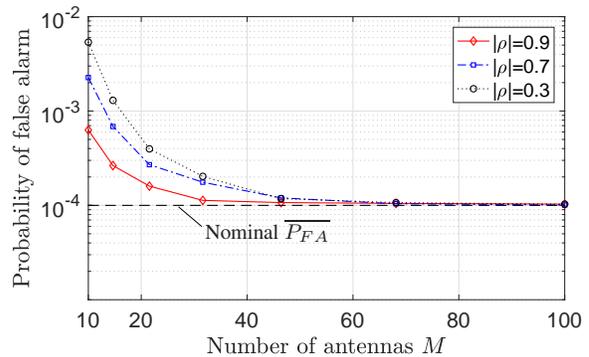}
\put(32,20){\footnotesize{Nominal $\overline{P_{FA}}$}}
\end{overpic}
\end{center}\vskip-6mm
\caption{PFA of the MW test in \eqref{Wald_test} as a function of $M$ for different values of $|\rho|$. The nominal PFA is fixed to $\overline{P_{FA}} = 10^{-4}$. Convergence to $\overline{P_{FA}}$ is achieved already for $M\ge 50$.} \vspace{-0.3cm}\label{fig:PFA} 
\end{figure}

\begin{figure}[t!]\vspace{-0.2cm} 
\begin{center}
\begin{overpic}[unit=1mm,width=.95\columnwidth]{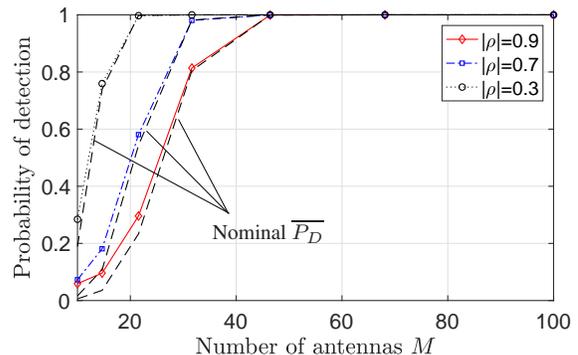}
\put(35,20){\footnotesize{Nominal $\overline{P_{D}}$}}
\end{overpic}
\end{center}\vskip-6mm
\caption{PD of the MW test in \eqref{Wald_test} as a function of $M$. Convergence to the nominal $\overline{P_{D}}$ is achieved already for $M\ge 50$.\vspace{-0.5cm} } \label{fig:PD} 
\end{figure}

Figs. \ref{fig:PFA} and \ref{fig:PD} illustrate the PFA and PD of the proposed MW test in \eqref{Wald_test} as a function of the number of antennas $M$ and different values of $|\rho|$. In line with Theorem 2, the results show that PFA tends to the nominal value $\overline{P_{FA}} = 10^{-4}$ as $M$ increases. This is achieved despite the misspecification in the clutter distribution. With $|\rho| = 0.3$, a good convergence is already achieved for $M=50$, which is a reasonable number for practical applications. A larger $M$ is needed as $|\rho|$ increases. This is an expected result since the clutter power is highly focused in the angular direction of the target. In agreement with Corollary 1, Fig. \ref{fig:PD} shows that the PD approaches the asymptotic expression provided in Corollary 1 as $M$ increases.

\vspace{-0.3cm}
\section{Conclusions and Discussions}
\vspace{-0.2cm}
We proved that it is possible to build a Wald-type detector whose asymptotic distribution is a $\chi$-squared pdf regardless of the true, but unknown, statistical characterization of the data generating process. This was achieved by combining the misspecification theory with the paradigm of large-scale MIMO radar systems, which makes it as the first attempt to apply the \virg{massive} MIMO paradigm of communication systems to radar applications. The analysis assumed a simple autoregressive model of order $1$ for the observation data. A generalization to AR models of higher order is required to come out with a fully deployable framework. This is addressed in \cite{FSGG19} wherein a \emph{robust} Wald-type detector that does not require any a priori knowledge on the order of the autoregressive model is developed.

\section{PROOF OF PROPOSITION 1}\label{App1}

\label{proof_prop_1}
The complex AR(1) process in $H_1$ of (1) admits a real representation as:
\be
\label{real_rep_AR}
\tilde{\mb{x}}_n = 
\left\lbrace \begin{array}{ll}
	\tilde{\bs{\mu}}_1(\alpha) + \tilde{\mb{c}}_{1}, & n=1,\\
	\tilde{\bs{\mu}}_n(\alpha,\rho,x_{n-1}) + \tilde{\mb{w}}_{n}, & n = 2,\ldots,N
\end{array}\right. 
\ee
where $\tilde{\bs{\mu}}_1(\alpha)$ and $\tilde{\bs{\mu}}_n(\alpha ,\rho, x_{n-1})$ are the real representations of the complex scalars $\mu_1(\alpha) \triangleq \alpha v_1$ and $\mu(\alpha,\rho,x_{n-1}) \triangleq \alpha (v_n -\rho v_{n-1}) + \rho x_{n-1}$, respectively. Under the (possibly misspecified) Gaussianity assumption, the pdf of the real representation of the innovations $\tilde{\mb{w}}_{n}$ is $\forall n$ 
\begin{align}
\!\!\!\!\!\!f_{\tilde{\mb{w}}_{n}}(w_{R,n},w_{I,n}) = \frac{1}{2\pi(\sigma_w^2/2)} e^{-\frac{\norm{\tilde{\mb{w}}_{n}}^2}{2(\sigma_w^2/2)}} = g(w|0,\sigma_w^2).
\end{align}
By using \eqref{real_rep_AR} and exploiting the properties of Gaussian AR(1) processes yields
\begin{align}\notag
f_{\tilde{\mb{x}}_{n}|\tilde{\mb{x}}_{n-1}}&(x_{R,n},x_{I,n}|\tilde{\mb{x}}_{n-1}) \\
&= \frac{1}{\pi \theta_5} e^{-\frac{\norm{\tilde{\mb{x}}_{n}-\tilde{\bs{\mu}}_n(\alpha,\rho,x_{n-1})}^2}{\theta_5}}\\
& = \frac{1}{\pi \theta_5} e^{-\frac{|x_{n}-\mu_n(\alpha,\rho,x_{n-1})|^2}{\theta_5}}, \; n=2,\ldots,N
\end{align}
while, from the stationarity property in (5), we get:
\be
\begin{split}
	f_{\tilde{\mb{x}}_{1}}(x_{R,1},x_{I,1}) & = \frac{1}{2\pi\tonde{\frac{\sigma^2}{2(1-|\rho|^2)}}} e^{-\frac{\norm{\tilde{\mb{x}}_{1}-\tilde{\bs{\mu}}_1(\alpha)}^2}{2\tonde{\frac{\sigma^2}{2(1-|\rho|^2)}}}}\\
	& = \frac{1}{\pi s(\alpha,\rho)} e^{-\frac{|x_{1}-\mu_1(\alpha)|^2}{s(\alpha,\rho)}}.
\end{split}
\ee
Finally, (9) follows directly from the definition of the parameter vector $\bs{\theta}$ in (8).
	\bigskip
	\section{Closed-form expression for the score functions in (14) and (15) } \label{App2}
	
	Let us recall here the expressions of the \textit{score functions} introduced in Section 4 as:
	\begin{align}
	\mb{s}_1(\bs{\theta}) &\triangleq \nabla_{\bs{\theta}}\ln g(x_1|\mu_1(\bs{\theta}),s(\bs{\theta})), \quad n=1\\
	\mb{s}_n(\bs{\theta}) &\triangleq \nabla_{\bs{\theta}}  \ln g(x_n|\mu_n(\bs{\theta},x_{n-1}),\theta_5), \;n=2,\ldots,N
	\end{align}
	In the following, the closed-form expression for both is provided. From the results of Proposition 1, it is immediate to verify that: 
	\begin{align}\notag
	\ln &g(x_n|\mu_n(\bs{\theta},x_{n-1}),\theta_5) \\
		&= -\ln \theta_5 -  \theta_5^{-1} \tonde{\realpart{\varepsilon_n(\bs{\theta})}^2 + \impart{\varepsilon_n(\bs{\theta})}^2}
	\end{align}
	where
	\be
	\varepsilon_n(\bs{\theta}) \equiv \varepsilon_n(x_n,\mu_n(\bs{\theta},x_{n-1})) \triangleq x_n-\mu_n(\bs{\theta},x_{n-1})
	\ee
	for $n=2,\ldots,N$. The score vector for a single conditional observation  $\mb{s}_n(\bs{\theta}) \triangleq \nabla_{\bs{\theta}}  \ln g(x_n|\mu_n(\bs{\theta},x_{n-1}),\theta_5)$ can be expressed as:
	\be
	\label{score}
	\begin{split}
		\mb{s}_n(\bs{\theta}) & = 2 \theta_5^{-1}\tonde{\nabla_{\bs{\theta}}\realpart{\mu_n(\bs{\theta},x_{n-1})}\realpart{\varepsilon_n(\bs{\theta})}}+\\
		&+2 \theta_5^{-1}\tonde{\nabla_{\bs{\theta}}\impart{\mu_n(\bs{\theta},x_{n-1})}\impart{\varepsilon_n(\bs{\theta})}}  +\\ &+\theta_5^{-1}\quadre{\theta_5^{-1}|\varepsilon_n(\bs{\theta})|^2 - 1}\mb{e}_5
	\end{split}
	\ee
	for $n=2,\ldots,N$. Note that $\mb{e}_5 \triangleq \nabla_{\bs{\theta}}\theta_5 = [0,0,0,0,1]^T$ For $n=1$, we have:
	\be
	\label{score_1}
	\begin{split}
		\mb{s}_1(\bs{\theta}) &= 2s(\bs{\theta})^{-1} \tonde{ \nabla_{\bs{\theta}}\realpart{\mu_1(\bs{\theta})} \realpart{\varepsilon_1(\bs{\theta})}} +\\
		&+ 2s(\bs{\theta})^{-1} \tonde{ \nabla_{\bs{\theta}}\impart{\mu_1(\bs{\theta})}\impart{\varepsilon_1(\bs{\theta})}}  +\\ &+s(\bs{\theta})^{-1}\quadre{s(\bs{\theta})^{-1}|\varepsilon_1(\bs{\theta}) |^2 - 1}\nabla_{\bs{\theta}}s(\bs{\theta})
	\end{split}
	\ee
	where
	\be
	\varepsilon_1(\bs{\theta}) \equiv \varepsilon_1(x_1,\mu_1(\bs{\theta})) \triangleq x_1-\mu_1(\bs{\theta}).
	\ee
	Through straightforward calculation, the gradients involved in the previous equations can be obtained as:
	\begin{align}
	\label{grad_1}
	\nabla_{\bs{\theta}}\realpart{\mu_1(\bs{\theta})} &= (\realpart{v_1}, -\impart{v_1}, 0, 0, 0)^T,\\
	\label{grad_2}
	\nabla_{\bs{\theta}}\impart{\mu_1(\bs{\theta})} &= (\impart{v_1}, \realpart{v_1}, 0, 0, 0)^T,
	\end{align}
	\begin{figure*}
		\be
		\label{grad_3}
		\nabla_{\bs{\theta}}\realpart{\mu_n(\bs{\theta},x_{n-1})} = \left( 
		\begin{array}{c} 
			\realpart{v_n} + \impart{v_{n-1}}\theta_4-\realpart{v_{n-1}}\theta_3\\
			-\impart{v_n} + \impart{v_{n-1}}\theta_3+\realpart{v_{n-1}}\theta_4\\
			\realpart{x_{n-1}} + \impart{v_{n-1}}\theta_2-\realpart{v_{n-1}}\theta_1\\
			-\impart{x_{n-1}} + \impart{v_{n-1}}\theta_1+\realpart{v_{n-1}}\theta_2\\
			0
		\end{array}
		\right)
		\ee
		\hrule
		\be
		\label{grad_4}
		\nabla_{\bs{\theta}}\impart{\mu_n(\bs{\theta},x_{n-1})} = \left( 
		\begin{array}{c} 
			\impart{v_n} - \impart{v_{n-1}}\theta_3-\realpart{v_{n-1}}\theta_4\\
			\realpart{v_n} + \impart{v_{n-1}}\theta_4-\realpart{v_{n-1}}\theta_3\\
			\impart{x_{n-1}} - \impart{v_{n-1}}\theta_1-\realpart{v_{n-1}}\theta_2\\
			\realpart{x_{n-1}} + \impart{v_{n-1}}\theta_2-\realpart{v_{n-1}}\theta_1\\
			0
		\end{array}
		\right)
		\ee
		\hrule
	\end{figure*}
	\be
	\label{grad_5}
	\nabla_{\bs{\theta}}s(\bs{\theta}) = \left( 
	\begin{array}{c} 
		0\\
		0\\
		\frac{2\theta_3 \theta_5}{(1-\theta_3^2-\theta_4^2)^2}\\
		\frac{2\theta_4 \theta_5}{(1-\theta_3^2-\theta_4^2)^2}\\
		\frac{1}{1-\theta_3^2-\theta_4^2}
	\end{array}
	\right),
	\ee
	whereas \eqref{grad_3} and \eqref{grad_4} are shown at the top of the next page.
	\section{Closed-form expressions for the matrices $\mb{H}_1(\bar{\bs{\theta}})$ and $\mb{H}_n(\bar{\bs{\theta}})$ in (12) and (13)}\label{App3}
	
	Through direct calculation, it can be verified that:
	\begin{align}\notag
		\mb{H}_n&(\bar{\bs{\theta}})= \\\notag
		&-2[\bar{\bs{\theta}}]_5^{-1} \tonde{  \nabla_{\bs{\theta}}\realpart{\mu_n(\bar{\bs{\theta}},x_{n-1})} \nabla_{\bs{\theta}}^T\realpart{\mu_n(\bar{\bs{\theta}},x_{n-1})}} \\\notag
		& -2[\bar{\bs{\theta}}]_5^{-1} \tonde{ \nabla_{\bs{\theta}}\impart{\mu_n(\bar{\bs{\theta}},x_{n-1})} \nabla_{\bs{\theta}}^T\impart{\mu_n(\bar{\bs{\theta}},x_{n-1})}} \\
		&- [\bar{\bs{\theta}}]_5^{-2} \mb{e}_5\mb{e}_5^T
	\end{align} 
	for $n=2,\ldots,N$. Clearly, for $n=1$, we have:
	\begin{align}\notag
		\mb{H}_1(\bar{\bs{\theta}}) 
		&= -2s(\bar{\bs{\theta}})^{-1} \quadre{  \nabla_{\bs{\theta}}\realpart{\mu_1(\bar{\bs{\theta}})} \nabla_{\bs{\theta}}^T\realpart{\mu_1(\bar{\bs{\theta}})} } \\ \notag
		& + -2s(\bar{\bs{\theta}})^{-1} \quadre{\nabla_{\bs{\theta}}\impart{\mu_1(\bar{\bs{\theta}})}\nabla_{\bs{\theta}}^T\impart{\mu_1(\bar{\bs{\theta}})}}\\
		  &-s(\bar{\bs{\theta}})^{-2} \nabla_{\bs{\theta}}s(\bar{\bs{\theta}})(\nabla_{\bs{\theta}}s(\bar{\bs{\theta}}))^T.
	\end{align}
	Notice that the closed form expressions of the gradient operators involved in the previous two equations are given in \eqref{grad_1}--\eqref{grad_5}.
	
	\section{Proof of Theorem 2}\label{App4}
	
	Let us recall here the misspecified Wald (MW) detector as defined in (23):
	\begin{align}
	\label{Wald_test_1}
	\Lambda_{\mathsf{MW}}(\mb{x}) &= N{\hat{\bs{\alpha}}}^T\tonde{\mb{J}\mb{C}_N(\hat{\bs{\theta}})\mb{J}^T}^{-1}\!\!\!{\hat{\bs{\alpha}}}
	\end{align}
	with $\mb{J} = \quadre{\mb{I}_2,\mb{0}_3}$.
	
	\subsection{Asymptotic distribution of $\Lambda_{\mathsf{MW}}$ under $H_0$} 
	Next we show that, under the null hypothesis $H_0$, $\Lambda_{\mathsf{MW}}$ has an asymptotic central $\chi$-square distribution. Let us define the vector
	\be
	\bar{\bs{\theta}}_{H_0} \triangleq (\mb{0}_2,\bar{\bs{\rho}},\bar{\sigma}^2)^T
	\ee
	as the true parameter vector under the null hypothesis. Under $H_0$, from Theorem 1, we have that:
	\be
	\hat{\bs{\theta}}\overset{a.s.}{\underset{N\rightarrow \infty}{\rightarrow}}\bar{\bs{\theta}}_{H_0}
	\ee
	\be
	\label{norm_alpha}
	\sqrt{N}\tonde{\mb{J}\mb{C}_{\bar{\bs{\theta}}_{H_0}}\mb{J}^T}^{-1/2}\hat{\bs{\alpha}} \underset{N\rightarrow \infty}{\sim} \mathcal{N}\tonde{\mb{0},\mb{I}_2}.
	\ee
	For Hermitian and positive-definite matrices, the inverse operator and the principal square root are both continuous operators (see e.g. \cite{square_root_cont}), then their composition is continuous. Then, from (22) and by using the Continuous Mapping Theorem \cite[Theo. 2.7]{Billi}, we have that:
	\be
	\label{conv_C_est_root}
	\mb{C}_N^{-1/2}(\hat{\bs{\theta}}) - \mb{C}^{-1/2}_{\bar{\bs{\theta}}_{H_0}} \overset{a.s.}{\underset{N\rightarrow \infty}{\rightarrow}} \mb{0}.
	\ee
	Let us rewrite the test in \eqref{Wald_test} as in \eqref{wald_23} at the top of next page.
	\begin{figure*}
	\be
	\label{wald_23}
	\Lambda_{\mathsf{MW}}(\mb{x}) = \tonde{\sqrt{N}\tonde{\mb{J}\mb{C}_N(\hat{\bs{\theta}})\mb{J}^T}^{-1/2}\hat{\bs{\alpha}}}^T\tonde{\sqrt{N}\tonde{\mb{J}\mb{C}_N(\hat{\bs{\theta}})\mb{J}^T}^{-1/2}\hat{\bs{\alpha}}}.
	\ee
	\hrule
	\end{figure*}
	Finally, from \eqref{norm_alpha} and \eqref{conv_C_est_root}, by a direct applications of the Slutsky's Lemma, we immediately obtain \eqref{asy_H0_MWT} at the top of the next page,
	\begin{figure*}
	\be
	\label{asy_H0_MWT}
	\Lambda_{\mathsf{MW}}(\mb{x}|H_0) = {\underbrace{\tonde{\sqrt{N}\tonde{\mb{J}\mb{C}_{\bar{\bs{\theta}}_{H_0}}\mb{J}^T}^{-1/2}\hat{\bs{\alpha}}}}_{\underset{N\rightarrow \infty}{\sim} \mathcal{N}\tonde{\mb{0},\mb{I}_2}}}^T\underbrace{\tonde{\sqrt{N}\tonde{\mb{J}\mb{C}_{\bar{\bs{\theta}}_{H_0}}\mb{J}^T}^{-1/2}\hat{\bs{\alpha}}}}_{\underset{N\rightarrow \infty}{\sim} \mathcal{N}\tonde{\mb{0},\mb{I}_2}} \underset{N\rightarrow \infty}{\sim} \chi_2^2(0),
	\ee  
	\hrule
	\be
	\label{asy_H1_MWT}
	\Lambda_{\mathsf{MW}}(\mb{x}|H_1) = {\underbrace{\tonde{\sqrt{N}\tonde{\mb{J}\mb{C}_N(\hat{\bs{\theta}})\mb{J}^T}^{-1/2}\hat{\bs{\alpha}}}}_{\underset{N\rightarrow \infty}{\sim} \mathcal{N}\tonde{\tonde{\mb{J}\mb{C}_{\bar{\bs{\theta}}_{H_0}}\mb{J}^T}^{-1/2}\mb{d},\mb{I}_2}}}^T\underbrace{\tonde{\sqrt{N}\tonde{\mb{J}\mb{C}_N(\hat{\bs{\theta}})\mb{J}^T}^{-1/2}\hat{\bs{\alpha}}}}_{\underset{N\rightarrow \infty}{\sim} \mathcal{N}\tonde{\tonde{\mb{J}\mb{C}_{\bar{\bs{\theta}}_{H_0}}\mb{J}^T}^{-1/2}\mb{d},\mb{I}_2}} \underset{N\rightarrow \infty}{\sim} \chi_2^2\tonde{\mb{d}^T\tonde{\mb{J}\mb{C}_{\bar{\bs{\theta}}_{H_0}}\mb{J}^T}^{-1}\mb{d}}.
	\ee 
	\hrule
\end{figure*}
	where $\chi_2^2(0)$ indicates a central $\chi$-squared random variable with two degrees of freedom.
	
	\subsection{Asymptotic distribution of $\Lambda_{\mathsf{MW}}$ under local alternatives}
	
	Suppose that the alternatives to $H_0$ is of the form:
	\be
	\label{alter}
	H_1: \; \bs{\alpha} = \frac{\mb{d}}{\sqrt{N}}, \quad \mb{d} \in \mathbb{R}^2.
	\ee 
	Let us define the vector
	\be
	\bar{\bs{\theta}}_{H_1}^{(N)} \triangleq (\mb{d}/\sqrt{N},\bar{\bs{\rho}},\bar{\sigma}^2)^T
	\ee
	as the true parameter vector under local alternatives. Note that
	\be
	\label{lim_theta}
	\lim_{N\rightarrow\infty}\bar{\bs{\theta}}_{H_1}^{(N)} = \bar{\bs{\theta}}_{H_0}.
	\ee
	If the matrices $\mb{A}_{\bs{\theta}}$ and $\mb{B}_{\bs{\theta}}$, defined in (17) and (18) respectively, are continuous in a neighbourhood of $\bar{\bs{\theta}}_{H_0}$, then $\mb{C}_{\bs{\theta}}$ is continuous in a neighbourhood of $\bar{\bs{\theta}}_{H_0}$ as well. Then, from \eqref{lim_theta}:
	\be
	\label{lim_C}
	\lim_{N\rightarrow\infty}\mb{C}_{\bar{\bs{\theta}}_{H_1}^{(N)}} = \mb{C}_{\bar{\bs{\theta}}_{H_0}}.
	\ee
	Under $H_1$ in \eqref{alter}, from Theorem 1, we have that:
	\be
	\hat{\bs{\theta}}\overset{a.s.}{\underset{N\rightarrow \infty}{\rightarrow}}\bar{\bs{\theta}}^{(N)}_{H_1}
	\ee
	\be
	\label{norm_alpha_H1}
	\sqrt{N}\tonde{\mb{J}\mb{C}_{\bar{\bs{\theta}}^{(N)}_{H_1}}\mb{J}^T}^{-1/2}\!\!\!\!\tonde{\hat{\bs{\alpha}} - \mb{d}/\sqrt{N}} \underset{N\rightarrow \infty}{\!\!\sim\!\!} \mathcal{N}\tonde{\mb{0},\mb{I}_2}.
	\ee
	As before, by using the Continuous Mapping Theorem \cite[Theo. 2.7]{Billi} and the limiting results in \eqref{lim_C}, we have that:
	\be
	\label{conv_C_est_root_H1}
	\mb{C}_N^{-1/2}(\hat{\bs{\theta}}) - \mb{C}^{-1/2}_{\bar{\bs{\theta}}^{(N)}_{H_1}} \underset{N\rightarrow \infty}{=} \mb{C}_N^{-1/2}(\hat{\bs{\theta}}) - \mb{C}^{-1/2}_{\bar{\bs{\theta}}_{H_0}} \overset{p_{X_N}}{\underset{N\rightarrow \infty}{\rightarrow}} \mb{0}.
	\ee
	Let us rewrite the test in \eqref{Wald_test} as in \eqref{wald_23} at top of the next page.
	From \eqref{norm_alpha_H1} and \eqref{conv_C_est_root_H1}, by a direct application of the Slutsky's Lemma, we immediately obtain the result in \eqref{asy_H1_MWT} at the top of the next page.
	This result the can be used to approximate the power of the test, i.e. the $P_D$. To do this, it is enough to put $\mb{d} \equiv \sqrt{N}\bar{\bs{\alpha}}$. Then, we have:
	\be
	\Lambda_{\mathsf{MW}}(\mb{x}|H_1) \underset{N\rightarrow \infty}{\sim} \chi_2^2\tonde{N\bar{\bs{\alpha}}^T\tonde{\mb{J}\mb{C}_{\bar{\bs{\theta}}_{H_0}}\mb{J}^T}^{-1}\bar{\bs{\alpha}}}.
	\ee
	Finally, using the properties of the non-central $\chi$-squared distribution with $2$ dof \cite{non_cenrtal_CDF}, a closed form expression of the asymptotic $P_D$ can be expressed as:
	\be
	P_D(\lambda) = Q_1\tonde{\sqrt{N\bar{\bs{\alpha}}^T\tonde{\mb{J}\mb{C}_{\bar{\bs{\theta}}_{H_0}}\mb{J}^T}^{-1}\bar{\bs{\alpha}}},\sqrt{\lambda}}
	\ee
	where $Q_1(\cdot,\cdot)$ is the Marcum $Q$ function of order 1.

\bibliographystyle{IEEEtran}
\bibliography{IEEEabrv,ref_MIMO}



\end{document}